\documentclass[12pt,letter]{article}

%

\usepackage{graphicx,colordvi,psfrag,bbm}
\usepackage{amsmath,amssymb}
\usepackage{epstopdf}
\usepackage[caption=false]{subfig}
\usepackage{epsfig,cite}
\usepackage{calc
, pgf, xcolor}
\usepackage{nicefrac}
\usepackage{enumerate}
\usepackage{url}

\allowdisplaybreaks

\newtheorem{theorem}{Theorem}
\newtheorem{corollary}{Corollary}
\newtheorem{remark}{Remark}

\newtheorem{proposition}{Proposition}
\newtheorem{lemma}{Lemma}
\newtheorem{conjecture}{Conjecture}
\newenvironment{proof}[1][Proof]{\noindent\textbf{#1.} }{\ \rule{0.5em}{0.5em}}

\newcommand{\RR}{\mathbb{R}}

\def\Expt{\mathbb{E}}

\def\Expt{\mathbb{E}}

\begin{document}

\title{An Improved Upper Bound for the Most Informative Boolean Function Conjecture}
\author{Or Ordentlich, Ofer Shayevitz and Omri Weinstein \thanks{Or Ordentlich and Ofer Shayevitz are with the Department of Electrical Engineering - Systems at the Tel Aviv University, \{ordent,ofersha\}@eng.tau.ac.il. Omri Weinstein is with the Department of Computer Science, Princeton University, oweinste@cs.princeton.edu.
The work of O. Ordentlich was supported by the Admas Fellowship Program of the Israel Academy of Science and Humanities. The work of O. Shayevitz was supported by an ERC grant no. 639573, a CIG grant no. 631983, and an ISF grant no. 1367/14. The work of O. Weinstein was supported by a Simons Fellowship award in TCS and a Siebel scholarship.}}

\parskip 3pt

\date{}

\maketitle

\begin{abstract}
Suppose $X$ is a uniformly distributed $n$-dimensional binary vector and $Y$ is obtained by passing $X$ through a binary symmetric channel with crossover probability $\alpha$. A recent conjecture by Courtade and Kumar postulates that $I(f(X);Y)\leq 1-h(\alpha)$ for any Boolean function $f$. So far, the best known upper bound was $I(f(X);Y)\leq (1-2\alpha)^2$.
In this paper, we derive a new upper bound that holds for all balanced functions, and improves upon the best known bound for all $\tfrac{1}{3}<\alpha<\tfrac{1}{2}$.
\end{abstract}

\section{Introduction}\label{sec:intro}

Let $X$ be an $n$-dimensional binary vector uniformly distributed over $\{0,1\}^n$, and $Y$ be the output of passing each component of $X$ through a binary symmetric channel with crossover probability $\alpha\leq1/2$. The following was recently conjectured by Courtade and Kumar~\cite{ck14}.
\begin{conjecture}\label{conj:ck}
For any Boolean function $f:\{0,1\}^n\mapsto\{0,1\}$ it holds that
\begin{align}
I\left(f(X);Y\right)\leq 1-h(\alpha),\label{eq:Conj}
\end{align}
where $I(f(X);Y)$ is the mutual information between $f(X)$ and $Y$, and $h(p)\triangleq -p\log{p}-(1-p)\log(1-p)$ is the binary entropy function.\footnote{All logarithms are taken to base $2$.}
\end{conjecture}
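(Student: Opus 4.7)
I would address the conjecture for balanced $f$ by converting it into a Jensen-type Fourier-analytic inequality on the Boolean cube. Identify $\{0,1\}^n$ with $\{-1,+1\}^n$ via $b \mapsto 1-2b$, and set $\tilde f \triangleq 1-2f$, so that $\tilde f:\{-1,+1\}^n \to \{-1,+1\}$ has $\Expt[\tilde f]=0$ for balanced $f$. With $\rho \triangleq 1-2\alpha$, a direct calculation gives $\Pr[f(X)=1 \mid Y=y] = (1 - (T_\rho\tilde f)(y))/2$, where $T_\rho$ is the Bonami--Beckner noise operator. Defining $\phi(t) \triangleq h((1+t)/2)$ and noting $h(\alpha) = \phi(\rho)$, one obtains
\begin{align*}
I(f(X);Y) \;=\; 1 - \Expt_Y\!\left[\phi\!\left((T_\rho\tilde f)(Y)\right)\right],
\end{align*}
so the conjecture is equivalent to $\Expt_Y[\phi((T_\rho\tilde f)(Y))] \geq \phi(\rho)$, with equality achieved by the dictator $f(x)=x_1$ (since then $(T_\rho\tilde f)(Y) = \pm\rho\, Y_1 \in \{\pm\rho\}$ uniformly and $\phi$ is even).

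The best-known bound $I(f(X);Y) \leq \rho^2$ follows from the pointwise estimate $\phi(t) \geq 1-t^2$ (equivalent to $h(p)\geq 4p(1-p)$) together with Parseval, which gives $\Expt[(T_\rho\tilde f)(Y)^2] \leq \rho^2$ for balanced $f$. This bound is wasteful because $\phi(t) \geq 1-t^2$ is tight only at $t \in \{-1,0,+1\}$, whereas near the origin $\phi(t) = 1 - t^2/(2\ln 2) + O(t^4)$ with $1/(2\ln 2) \approx 0.72 < 1$.

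My plan for $\alpha > 1/3$ (equivalently $\rho < 1/3$) is to exploit the fact that the random variable $Z \triangleq (T_\rho\tilde f)(Y)$ then lies pointwise in the narrow band $[-\rho,\rho]$. On this interval I would construct a polynomial lower bound $\phi(t) \geq 1 - a(\rho)\,t^2 - b(\rho)\,t^4$, tuned so as to be tight at $t=\pm\rho$ (keeping the dictator extremal). Taking expectations,
\begin{align*}
1 - \Expt[\phi(Z)] \;\leq\; a(\rho)\,\Expt[Z^2] + b(\rho)\,\Expt[Z^4],
\end{align*}
where $\Expt[Z^2] \leq \rho^2$ by Parseval, and $\Expt[Z^4]$ is to be controlled by hypercontractivity or level-$k$ estimates on $\tilde f$. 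Both of these strongly exploit the factor $\rho^{2|S|}$, which is punishing on higher Fourier levels when $\rho < 1/3$.

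The main obstacle is twofold: (i) producing a polynomial lower bound on $\phi$ whose constants $a(\rho), b(\rho)$ are sharp enough to recover the exact conjectured value $1-h(\alpha)$ rather than a looser $c(\rho)\rho^2$; and (ii) controlling $\Expt[Z^4]$ uniformly across balanced Boolean $f$, which morally amounts to certifying that the dictator extremizes a ratio such as $\Expt[Z^4]/\Expt[Z^2]^2$. For $\alpha \leq 1/3$ the variable $Z$ is no longer confined to a narrow band and this local, Taylor-style approach breaks down---explaining why I expect the method to yield an improvement precisely in the range $\tfrac{1}{3}<\alpha<\tfrac{1}{2}$ claimed in the abstract, and to leave the small-$\alpha$ regime to a qualitatively different argument.
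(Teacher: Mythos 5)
The statement you were asked to prove is Conjecture~\ref{conj:ck}, which the paper presents as an \emph{open conjecture}: there is no proof of it in the paper, and the paper's actual contribution (Theorem~\ref{thm:main}) is only the weaker bound $\tfrac{\log(e)}{2}(1-2\alpha)^2+9\bigl(1-\tfrac{\log(e)}{2}\bigr)(1-2\alpha)^4$ for balanced functions, which exceeds $1-h(\alpha)$ for every $\alpha$ bounded away from $\tfrac{1}{2}$. Your plan, if carried out, is essentially the paper's route to that theorem --- write $I(f(X);Y)=1-\Expt_Y\,\phi\bigl((T_\rho\tilde f)(Y)\bigr)$, lower-bound $\phi$ by a quartic polynomial, control $\Expt[Z^2]$ by Parseval and $\Expt[Z^4]$ by hypercontractivity --- and you are candid that the two obstacles you list (sharpness of the polynomial, uniform control of $\Expt[Z^4]$) are exactly what prevents this from reaching $1-h(\alpha)$. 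So the proposal does not prove the statement; at best it re-derives the paper's partial result, and you should present it as such.

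Beyond that, there is one concrete error in the plan. You assert that for $\rho<\tfrac{1}{3}$ the random variable $Z=(T_\rho\tilde f)(Y)$ "lies pointwise in the narrow band $[-\rho,\rho]$," and you propose to build a polynomial lower bound for $\phi$ that is valid and tight only on that interval. This is false for general balanced $f$: while $\hat f(\emptyset)=0$ kills the constant term, the sum $\sum_{S}\hat f(S)\rho^{|S|}\prod_{i\in S}y_i$ can be close to $\pm 1$ in magnitude for particular $y$ --- the paper's own Discussion section notes that for the majority function $\max_y|1-2P_y^{f}|\approx 1-2^{-nD(\tfrac{1}{2}\|\alpha)}$, which is near $1$, not near $\rho$. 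A lower bound on $\phi$ that is only valid on $[-\rho,\rho]$ therefore cannot be integrated against the law of $Z$. This is precisely why the paper instead truncates the Taylor series of $h\bigl(\tfrac{1-p}{2}\bigr)$ to obtain a bound valid on all of $[-1,1]$ (equation~\eqref{eq:entTaylor2}, tight at $p=\pm1$), accepting the fixed constants $\tfrac{\log(e)}{2}$ and $1-\tfrac{\log(e)}{2}$ rather than $\rho$-tuned ones. You would need to repair this step before even the weaker Theorem~\ref{thm:main} follows from your outline.
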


For a dictatorship function, $f(X)=X_i$ the conjectured upper bound~\eqref{eq:Conj} is attained with equality. Therefore, the conjecture can be interpreted as
postulating that dictatorship is the most ``informative'' Boolean function, i.e., it achieves the maximal $I(f(X);Y)$.

So far, the best known bound that holds universally for all Boolean functions is
\begin{align}
I\left(f(X);Y\right)\leq (1-2\alpha)^2.\label{eq:gerber}
\end{align}
This bound can be established through various techniques, including an application of Mrs. Gerber's Lemma~\cite{wz73,ErkipPhD,ct14}, the strong data-processing inequality~\cite{ag76,agkn13} and standard Fourier analysis as described below. 

In this paper, we derive an upper bound on $I(f(X);Y)$ that holds for all balanced functions, and improves upon~\eqref{eq:gerber} for all $\tfrac{1}{3}<\alpha<\tfrac{1}{2}$. Specifically, we obtain the following result.


\begin{theorem}
\label{thm:main}
For any balanced Boolean function $f(x):\{0,1\}^n\mapsto\{0,1\}$, and any
$\tfrac{1}{2}\left( 1-\tfrac{1}{\sqrt{3}}\right)\leq\alpha\leq\tfrac{1}{2}$, we have that
\begin{align}
I(f(X);Y)&\leq \frac{\log(e)}{2}(1-2\alpha)^{2}+9\left(1-\frac{\log(e)}{2}\right)(1-2\alpha)^{4}.\label{eq:mainbound}
\end{align}
\end{theorem}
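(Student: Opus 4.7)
The plan is to recast $I(f(X);Y)$ as an expectation of a smooth function of the noise operator applied to $g(x) := (-1)^{f(x)}$, and then combine a truncated Taylor expansion of the binary entropy with the Bonami--Beckner hypercontractive inequality. Setting $\rho := 1-2\alpha$ and writing the Fourier--Walsh expansion $g = \sum_S \hat g(S)\,\chi_S$ on $\{0,1\}^n$, balancedness of $f$ gives $\hat g(\emptyset)=0$ and Parseval yields $\sum_{|S|\geq 1}\hat g(S)^2 = 1$. Since $\Expt[\chi_S(X)\mid Y=y] = \rho^{|S|}\chi_S(y)$, the conditional expectation $T(y) := \Expt[g(X)\mid Y=y]$ equals $(T_\rho g)(y) := \sum_S \rho^{|S|}\hat g(S)\chi_S(y)$, i.e.\ the noise operator evaluated at $g$. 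Combined with $\Pr(f(X)=1\mid Y=y) = (1-T(y))/2$ and balance of $f(X)$, this yields the compact formula
\begin{align*}
I(f(X);Y) \;=\; 1 - \Expt\!\left[h\!\left(\tfrac{1+T(Y)}{2}\right)\right].
\end{align*}

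Next, I would derive a quartic polynomial majorant for $\phi(t) := 1 - h((1+t)/2)$ valid on $|t|\leq 1$. Twice integrating the identity $-h''(u) = 1/[u(1-u)\ln 2]$ produces the all-positive expansion $\phi(t) = \log e \cdot \sum_{k\geq 1} t^{2k}/[2k(2k-1)]$. Peeling off the $k=1$ term and bounding $t^{2k}\leq t^4$ for $k\geq 2$ reduces the tail sum to $t^4(\ln 2 - \tfrac12)$, giving
\begin{align*}
\phi(t) \;\leq\; \tfrac{\log e}{2}\,t^2 + \Bigl(1-\tfrac{\log e}{2}\Bigr)\,t^4, \qquad |t|\leq 1.
\end{align*}
Substituting $t=T(Y)\in[-1,1]$ and taking expectations reduces the theorem to the two moment bounds $\Expt[T^2]\leq \rho^2$ and $\Expt[T^4]\leq 9\rho^4$.

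The second-moment bound is immediate from Parseval: $\Expt[T^2] = \sum_{|S|\geq 1}\rho^{2|S|}\hat g(S)^2 \leq \rho^2$. The fourth-moment bound is the crux of the proof and the only place the hypothesis $\alpha\geq \tfrac12(1-1/\sqrt 3)$, equivalently $\rho\sqrt 3\leq 1$, is used. I would invoke the Bonami--Beckner hypercontractive inequality $\|T_{1/\sqrt 3}\,\psi\|_4 \leq \|\psi\|_2$ together with the semigroup factorisation $T_\rho = T_{1/\sqrt 3}\circ T_{\rho\sqrt 3}$ (valid precisely when $\rho\sqrt 3\leq 1$), applied to $\psi = T_{\rho\sqrt 3}\, g$, to get
\begin{align*}
\|T_\rho g\|_4 \;\leq\; \|T_{\rho\sqrt 3}\,g\|_2 \;=\; \Bigl(\sum_{|S|\geq 1} (3\rho^2)^{|S|}\,\hat g(S)^2\Bigr)^{\!1/2} \;\leq\; \sqrt 3\,\rho.
\end{align*}
Hence $\Expt[T(Y)^4] = \|T_\rho g\|_4^4 \leq 9\rho^4$, and combining both moment bounds in the previous display produces exactly the right-hand side of~\eqref{eq:mainbound}.

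The only real obstacle is the hypercontractive step; everything else is routine Fourier bookkeeping together with a one-variable analytic inequality. The numerics are also explained by this step: hypercontractivity buys a factor of $3$ per Fourier level, the resulting $9\rho^4$ beats $\rho^2$ exactly when $9\rho^2\leq 1$ (i.e.\ $\alpha\geq 1/3$), and the constraint $\alpha\geq \tfrac12(1-1/\sqrt 3)$ is precisely what is needed so that $T_{\rho\sqrt 3}$ remains a valid noise operator and the Bonami--Beckner inequality applies.
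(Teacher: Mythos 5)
Your proposal is correct and follows essentially the same route as the paper: the same truncated Taylor majorant of $1-h(\cdot)$ yielding the coefficients $\tfrac{\log e}{2}$ and $1-\tfrac{\log e}{2}$, the same Parseval bound $\Expt[T^2]\leq\rho^2$ for balanced functions, and the same factorization $T_\rho=T_{1/\sqrt3}\circ T_{\rho\sqrt3}$ combined with $(2,4)$-hypercontractivity to get $\Expt[T^4]\leq 9\rho^4$ (the paper states this for general even moments as its Lemma~\ref{lem:momentT}, but only the $k=2$ case you use is needed). The only nitpick is that the constraint $\rho\sqrt3\leq1$ is not needed for the semigroup identity itself but for the final step $\sum_{|S|\geq1}(3\rho^2)^{|S|}\hat g(S)^2\leq 3\rho^2$, which you do apply correctly.
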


%
%



For the proof of Theorem~\ref{thm:main}, we first lower bound the conditional entropy $H(f(X)|Y)$ in terms of the second and fourth moments of
the random variable $(1-2P_Y^f)$, where $P_y^f\triangleq\Pr\left(f(X)=0|Y=y\right)$. Specifically, we show that
\begin{align}
H(f(X)|Y)\geq 1-\frac{\log(e)}{2}\Expt(1-2P_Y^f)^{2}+\left(1-\frac{\log(e)}{2}\right)\Expt(1-2P_Y^f)^{4}.\label{eq:entboundIntro}
\end{align}
To upper bound the second and fourth moments in \eqref{eq:entboundIntro}, we use basic Fourier analysis of Boolean functions along with a simple application
of the Hypercontractivity Theorem~\cite{ODonnellBook,Bonami1970,Beckner75}, in order to derive universal upper bounds on $\Expt(1-2P_Y^f)^{2k}$ that hold
for all balanced Boolean functions.\footnote{
We remark that using similar techniques it is possible to obtain an upper bound on $I(f(X);Y)$ of a similar form that holds for any $q$-biased function (i.e., any
function for which $\Expt(f(X))=1-2q$). However, the obtained bound is not maximized at $q=\tfrac{1}{2}$, and therefore cannot be used to establish~\eqref{eq:mainbound} for all Boolean functions.} In particular, these bounds show that $\Expt(1-2P_Y^f)^2\leq(1-2\alpha)^2$ and
$\Expt(1-2P_Y^f)^4\leq 9(1-2\alpha)^4$. Plugging these bounds in~\eqref{eq:entboundIntro} yields the Theorem.

An appealing feature of the new upper bound, is that the ratio between the RHS of~\eqref{eq:mainbound} and $1-h(\alpha)$ approaches $1$ in the limit of $\alpha\rightarrow\tfrac{1}{2}$. For the bound~\eqref{eq:gerber}, on the other hand, the same ratio does not approach $1$. Rather
\begin{align}
\lim_{\alpha\rightarrow\tfrac{1}{2}}\frac{(1-2\alpha)^2}{1-h(\alpha)}= \frac{2}{\log(e)}\approx 1.3863.\nonumber
\end{align}
%
%


\section{Preliminaries}
\label{sec:pre}

To prove our results, it will be more convenient to map the additive group $\{0,1\}^n$ to the (isomorphic) multiplicative group $\{-1,1\}^n$. Specifically, let $X$ be an $n$-dimensional binary vector uniformly distributed over $\{-1,1\}^n$ and $Y$ be the output of passing each component of $X$ through a binary symmetric channel with crossover probability $\alpha\leq1/2$. Thus, for all $i\in\{1,\cdots,n\}$ we have $Y_i=X_i\cdot Z_i$, where $\{Z_i\}_{i=1}^n$ is an i.i.d. sequence of binary random variables statistically independent of $\{X_i\}_{i=1}^n$, with $\Pr(Z_i=-1)=\alpha$ and $\Pr(Z_i=1)=1-\alpha$. Note that $Y$ is also uniformly distributed over $\{-1,1\}^n$.

Let $f:\{-1,1\}^n\mapsto\{-1,1\}$ be a \emph{balanced} Boolean function, i.e.,
\begin{align}
\Pr\left(f(X)=-1\right)=\frac{1}{2}.\nonumber
\end{align}
Note that this condition is equivalent to $\mathbb{E}_X(f(X))=0$. For each $y\in\{-1,1\}^n$ define the posterior distribution of $f(X)$ given the observation $y$
\begin{align}
P_y^{f}\triangleq \Pr(f(X)=-1|Y=y)\nonumber,
\end{align}
and note that
\begin{align}\label{eq:condE}
\mathbb{E}(f(X)|Y=y)=1-2 P_y^f.
\end{align}

In what follows, we will extensively use  Fourier  analysis of real functions on the hypercube $\{-1,1\}^n$. 
Let $X$ be a random vector uniformly distributed on $\{-1,1\}^n$ and define $[n]\triangleq \{1,2,\ldots,n\}$.
The Fourier-Walsh transform of a function $f:\{-1,1\}^n\mapsto\RR$ is given by
\begin{align}
f(x)=\sum_{S\subseteq [n]} \hat{f}(S)\prod_{i\in S} x_i,
\end{align}
where
\begin{align}
\hat{f}(S)=\mathbb{E}_X\left(f(X)\prod_{i\in S} X_i\right)
\end{align}
is the correlation of $f$ with the ``parity" function on the subset $S$. 
It is easy to verify that the basis $\left\{\varphi_S(x)=\prod_{i\in S} x_i\right\}_{S\subseteq [n]}$ is orthonormal with respect to the inner product
$<f,g>=\mathbb{E}\left(f(X)g(X)\right)$, which implies that for any two functions $f,g:\{-1,1\}^n\mapsto\RR$ it holds that
\begin{align}
\mathbb{E}\left(f(X)g(X)\right)=\sum_{S\subseteq [n]}\hat{f}(S)\hat{g}(S).
\end{align}
In particular, Parseval's identity gives $\sum_S \hat{f}^2(S)=\mathbb{E}\left(f^2(X)\right)$. Thus, if $f$ is a Boolean function, i.e., $f:\{-1,1\}^n\mapsto\{-1,1\}$,
we have $\sum_S \hat{f}^2(S)=1$.

The following proposition gives three simple properties of the Fourier coefficients that will be useful in the sequel.
\begin{proposition}[Basic properties of the Fourier transform] \label{prop:fourier_properties}
Let $f:\{-1,1\}^n\mapsto\{-1,1\}$ be a Boolean function. We have that
\begin{enumerate}[(i)]
\item $f$ is balanced if and only if $\hat{f}(\emptyset)=0$;
\item If $|\hat{f}(S)|>0$ then $|\hat{f}(S)|\geq 2^{-n}$;
\item Any balanced function $f$ which is not a dictatorship function must satisfy
$\sum_{S : \; |S|\geq 2}\hat{f}^2(S) > 0$;
\end{enumerate}
\end{proposition}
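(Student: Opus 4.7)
The plan is to dispatch parts (i) and (ii) directly from definitions, and to handle (iii) by contraposition through a short squaring argument.

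For (i), since $\varphi_\emptyset(x) \equiv 1$, we have $\hat{f}(\emptyset) = \Expt_X(f(X)) = 1 - 2\Pr(f(X)=-1)$, so vanishing of $\hat{f}(\emptyset)$ is equivalent to the balanced condition. For (ii), I would write the coefficient as a uniform average, $\hat{f}(S) = 2^{-n}\sum_{x \in \{-1,1\}^n} f(x) \prod_{i \in S} x_i$; each summand belongs to $\{-1,+1\}$, so the sum is an integer and $\hat{f}(S) \in 2^{-n}\ZZ$. Hence if $\hat{f}(S)$ is nonzero, its magnitude is at least $2^{-n}$.

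For (iii), I would argue the contrapositive: if $f$ is balanced and $\sum_{|S|\geq 2}\hat{f}^2(S) = 0$, then part (i) also forces $\hat{f}(\emptyset) = 0$, leaving the truncated expansion $f(x) = \sum_{i=1}^n a_i x_i$ with $a_i \triangleq \hat{f}(\{i\})$. Now I would exploit the fact that $f(x)^2 \equiv 1$ on the hypercube by expanding the square:
\begin{align*}
\sum_{i=1}^n a_i^2 + 2\sum_{i<j} a_i a_j\, x_i x_j = 1 \qquad \text{for all } x \in \{-1,1\}^n.
\end{align*}
Reading this as the Fourier expansion of the constant function $1$ and using orthonormality of $\{\varphi_S\}$, I match coefficients on $\varphi_{\{i,j\}}$ to conclude $a_i a_j = 0$ for every $i \neq j$. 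Hence at most one coefficient $a_k$ is nonzero, and Parseval's identity then forces $a_k^2 = 1$, giving $f(x) = \pm x_k$, i.e., a dictator up to a global sign.

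No step in this plan poses a real obstacle: (i) and (ii) are one-line computations, and (iii) reduces to Parseval together with the Boolean constraint $f^2 \equiv 1$. The only interpretive wrinkle is that the contrapositive produces both $f(x) = x_k$ and $f(x) = -x_k$; I would read ``dictatorship'' in the sign-symmetric sense, consistent with the invariance $I(f(X);Y) = I(-f(X);Y)$ and with both $\pm x_k$ attaining equality in Conjecture~\ref{conj:ck}.
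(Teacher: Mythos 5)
Your proofs of (i) and (ii) coincide with the paper's: (i) is the observation $\hat{f}(\emptyset)=\Expt f(X)$, and (ii) is the integrality of $2^n\hat{f}(S)=\sum_x f(x)\prod_{i\in S}x_i$. For (iii) you take a genuinely different route. The paper also reduces to $f(x)=\sum_i \hat{f}(\{i\})x_i$, but then evaluates $f$ at the point $x$ with $x_i=\sign(\hat{f}(\{i\}))$ to conclude $\sum_i|\hat{f}(\{i\})|=1$ (the sum is nonnegative and must lie in $\{-1,1\}$), and plays this $\ell_1$ identity against Parseval's $\ell_2$ identity $\sum_i\hat{f}^2(\{i\})=1$: since $a^2\leq |a|$ for $|a|\leq 1$ with equality only at $a\in\{0,\pm 1\}$, exactly one coefficient survives. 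You instead square: $f^2\equiv 1$ forces the cross terms $2\sum_{i<j}a_ia_jx_ix_j$ to vanish identically, so by uniqueness of the Fourier expansion $a_ia_j=0$ for all $i\neq j$, and Parseval finishes. Both arguments are correct and equally elementary; yours is purely mechanical coefficient matching with no need to exhibit a maximizing point, while the paper's is marginally shorter. Note also that both arguments, the paper's included, pin down $f$ only up to a global sign ($f=\pm x_k$), so your closing remark about reading ``dictatorship'' sign-symmetrically applies equally to the paper's proof and is harmless for the way part (iii) is used later (only the dichotomy $\sum_{|S|\geq 2}\hat{f}^2(S)>0$ versus dictatorship matters in Corollary~\ref{cor:low_snr}).
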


\begin{proof}
\begin{enumerate}[(i)]
\item By definition $\hat{f}(\emptyset)=\Expt f(X)=0$ for $f$ balanced.
\item We note that the sum $\sum_{x\in\{-1,1\}^n} f(x)\prod_{i\in S} x_i$ is an integer, and the claim follows immediately.
\item Let $f$ be a balanced Boolean function ($\hat{f}(\emptyset)=0$) with $\sum_{S : \; |S| \geq 2}\hat{f}^2(S) = 0$.
Then $f(X)=\sum_{i\in [n]} \hat{f}(\{i\}) X_i$. Since there always exists some $x\in\{-1,1\}^n$ for which $f(x)=\sum_i|\hat{f}(\{i\})|$, we must have that $\sum_i|\hat{f}(\{i\})|=1$. On the other hand, by Parseval's identity we have $\sum_{i=1}\hat{f}^2(\{i\})=1$. Clearly, the last two equations can simultaneously hold if and only if there is a unique $i$ for which $|\hat{f}(\{i\})|>0$. Hence, $f$ must be a dictatorship.
\end{enumerate}
\end{proof}

For any function $f:\{-1,1\}^n\mapsto\RR$ with Fourier coefficients $\{\hat{f}(S)\}$ and $\rho\in\RR^+$, the \emph{noise operator}
$T_\rho f :\{-1,1\}^n\mapsto\RR$ is defined as~\cite{ODonnellBook}
\begin{align}
(T_\rho f)(x)\triangleq\sum_{S\subseteq[n]} \hat{f}(S)\rho^{|S|}\prod_{i\in S} x_i.
\end{align}
Recall that in our setting $X$ and $Y$ are the input and output of a binary symmetric channel with
crossover probability $\alpha$. Thus we can use the Fourier representation of $f$ to write $\mathbb{E}\left(f(X)|Y=y\right)$ as follows~\cite{ODonnellBook}
\begin{align}
\mathbb{E}\left(f(X)|Y=y\right)&=\mathbb{E}\left(\sum_{S\subseteq[n]} \hat{f}(S)\prod_{i\in S} X_i \ \big|Y=y \right)\nonumber\\
&=\sum_{S\subseteq[n]} \hat{f}(S)\mathbb{E}\left(\prod_{i\in S} y_i Z_i\right)\nonumber\\
&=\sum_{S\subseteq[n]} \hat{f}(S)\mathbb{E}\left(\prod_{i\in S} Z_i\right)\prod_{i\in S} y_i\nonumber\\
&=\sum_{S\subseteq[n]} \hat{f}(S)\prod_{i\in S}\mathbb{E}\left(Z_i\right)\prod_{i\in S} y_i\nonumber\\
&=\sum_{S\subseteq[n]} \hat{f}(S)(1-2\alpha)^{|S|}\prod_{i\in S} y_i,\label{eq:FourierGeneral}\\
&=(T_{1-2\alpha}f)(y),
\end{align}
where we have used the fact that $\{Z_i\}$ is an i.i.d. sequence. Recalling equation~\eqref{eq:condE}, this also yields
\begin{align}
1-2 P_y^f=(T_{1-2\alpha}f)(y)=\sum_{S\subseteq[n]} \hat{f}(S)(1-2\alpha)^{|S|}\prod_{i\in S} y_i\label{eq:Fourier}.
\end{align}

The following well-known theorem will play an important role in the derivation of Theorem~\ref{thm:main}.
\begin{theorem}[Hypercontractivity Theorem, ~\cite{ODonnellBook,Bonami1970,Beckner75}]
Let $1\leq p < q < \infty$. Then for all $\rho \leq \sqrt{\frac{p-1}{q-1}}$, and all $g: \{-1,1\}^n \mapsto \RR $,
it holds that
\begin{align}
\left[\Expt(\left|(T_\rho g)(X)\right|^q)\right]^\frac{1}{q} \leq \left[\Expt(\left|g(X)\right|^p)\right]^{\frac{1}{p}}.
\end{align}
\label{thm:hyper}
\end{theorem}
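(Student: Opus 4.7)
The plan is to prove the Hypercontractivity Theorem by the standard two-stage approach: first reduce the $n$-dimensional inequality to the one-dimensional (two-point) case via induction on $n$ and tensorization, then establish the two-point inequality directly.

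Stage 1 (tensorization). I would induct on $n$. Write $X=(X',X_n)\in\{-1,1\}^{n-1}\times\{-1,1\}$ and let $T_\rho^{(n)}$ denote the noise operator acting only on the last coordinate, and $T_\rho^{(1,\dots,n-1)}$ the one acting on the first $n-1$ coordinates; these commute and $T_\rho=T_\rho^{(n)}\,T_\rho^{(1,\dots,n-1)}$. Holding $X'$ fixed, the $n=1$ case applied to the single-variable function $x_n\mapsto(T_\rho^{(1,\dots,n-1)}g)(X',x_n)$ gives
\begin{align}
\left(\Expt_{X_n}|(T_\rho g)(X',X_n)|^q\right)^{1/q} \le \left(\Expt_{X_n}|(T_\rho^{(1,\dots,n-1)}g)(X',X_n)|^p\right)^{1/p}. \nonumber
\end{align}
Raising to the $q$-th power and taking $\Expt_{X'}$ bounds $\|T_\rho g\|_{L^q(X)}^q$ by an iterated norm with inner $L^p(X_n)$ and outer $L^q(X')$. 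The key analytic tool is then Minkowski's integral inequality, which for $q\ge p$ permits swapping these two norms in the favorable direction, converting the expression to an inner $L^q(X')$ followed by an outer $L^p(X_n)$. Applying the inductive hypothesis on $n-1$ variables to each $X_n$-slice of $g$ collapses the inner $L^q(X')$ of $T_\rho^{(1,\dots,n-1)}g$ into an $L^p(X')$ of $g$, and the outer $L^p(X_n)$ combines with it by Fubini to yield $\|T_\rho g\|_{L^q(X)}\le\|g\|_{L^p(X)}$.

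Stage 2 (two-point inequality). Any $g:\{-1,1\}\to\RR$ takes the form $g(x)=a+bx$, so $(T_\rho g)(x)=a+\rho bx$, and the claim reduces to
\begin{align}
\left(\tfrac{|a+\rho b|^q+|a-\rho b|^q}{2}\right)^{1/q} \le \left(\tfrac{|a+b|^p+|a-b|^p}{2}\right)^{1/p}. \nonumber
\end{align}
Since the LHS is monotone non-decreasing in $\rho\ge 0$ (an elementary symmetrization check), it suffices to prove the inequality for the extremal $\rho^\ast=\sqrt{(p-1)/(q-1)}$. By positive homogeneity and the symmetry $b\mapsto -b$, I would normalize to $a=1$, $b\ge 0$, and then compare
\begin{align}
\phi(b):=\tfrac{1}{q}\log\left(\tfrac{(1+\rho^\ast b)^q+(1-\rho^\ast b)^q}{2}\right), \quad \psi(b):=\tfrac{1}{p}\log\left(\tfrac{(1+b)^p+(1-b)^p}{2}\right). \nonumber
\end{align}
Both satisfy $\phi(0)=\psi(0)=0$ and $\phi'(0)=\psi'(0)=0$, and the defining identity $\rho^{\ast 2}(q-1)=p-1$ forces $\phi''(0)=\psi''(0)$, which is exactly how the sharp constant emerges. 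Showing $\psi-\phi\ge 0$ on $[0,1]$ (and, by a separate inspection, on $[1,\infty)$ where the absolute values change behavior) then reduces to a sign analysis of higher-order Taylor coefficients, equivalent to the classical Bonami/Beckner computation.

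The main obstacle is the two-point inequality itself. Stage 1 is essentially bookkeeping once the Minkowski swap is set up correctly, but Stage 2 carries all the analytic weight: the sharp constant $\rho^\ast=\sqrt{(p-1)/(q-1)}$ comes out of matching second-order Taylor coefficients, and verifying nonnegativity of the remainder $\psi-\phi$ beyond second order is the nontrivial calculus step that separates the hypercontractive regime from the trivial bound $\|T_\rho g\|_q\le\|g\|_q$ available only when $p=q$.
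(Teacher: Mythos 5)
The paper does not prove this theorem at all --- it is quoted as a black-box result with citations to O'Donnell's book and to Bonami and Beckner, and the paper's contribution only \emph{uses} it (with $p=2$, $q=2k$) inside Lemma~\ref{lem:momentT}. So there is no in-paper proof to compare against; your proposal is exactly the standard textbook route from the cited reference: tensorization by induction on $n$, then the two-point inequality. Your Stage~1 is correct and complete in outline: the factorization $T_\rho = T_\rho^{(n)}T_\rho^{(1,\dots,n-1)}$, the application of the $n=1$ case conditionally on $X'$, and the generalized Minkowski inequality (for $p\leq q$, the mixed norm with inner $L^p$ and outer $L^q$ is dominated by inner $L^q$ and outer $L^p$) are precisely the right steps, in the right direction.

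Stage~2, however, has a genuine gap as written. Matching $\phi''(0)=\psi''(0)$ via $\rho^{*2}(q-1)=p-1$ only explains why $\rho^*=\sqrt{(p-1)/(q-1)}$ is the \emph{best possible} constant (taking $b\to 0$ shows no larger $\rho$ can work); it does not prove $\psi-\phi\geq 0$, and ``sign analysis of higher-order Taylor coefficients'' is not a viable plan for general real $1\leq p<q$: the difference $\psi-\phi$ is not a termwise-positive series in general, and the region $b>1$ with $p<2$, where $|1-b|^p$ is no longer given by its Taylor expansion at $0$, is exactly where a naive expansion argument breaks down --- this is why Bonami's original treatment is an intricate case analysis rather than a calculus exercise. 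Two standard repairs are worth knowing. First, the semigroup identity $T_{\rho_1\rho_2}=T_{\rho_1}T_{\rho_2}$ together with duality ($T_\rho:L^p\to L^q$ with norm $1$ if and only if $T_\rho:L^{q'}\to L^{p'}$ with norm $1$, and the critical $\rho$ is the same) reduces the whole theorem to the single case $p=2\leq q$: one passes through $L^2$ via $\sqrt{(p-1)/(q-1)}=\sqrt{p-1}\cdot\sqrt{1/(q-1)}$. Second, for the only case this paper actually needs, $p=2$ and $q=2k$ with $k$ an integer, the two-point inequality with $\rho=1/\sqrt{2k-1}$ becomes, after expanding both sides of $\Expt\left((a+\rho bX)^{2k}\right)\leq\left(a^2+b^2\right)^k$, the elementary binomial inequality $\binom{2k}{2j}(2k-1)^{-j}\leq\binom{k}{j}$ for $0\leq j\leq k$ (with equality at $j=0,1$, which is the sharpness you detected at second order). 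If you restrict your Stage~2 to this case, your skeleton closes completely and suffices for every use of Theorem~\ref{thm:hyper} in the paper; the general $(p,q)$ statement requires the reduction above or the full Bonami--Beckner computation, which your sketch gestures at but does not supply.
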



\section{Proof of Theorem~\ref{thm:main}}
\label{sec:inter}

In this section we prove Theorem~\ref{thm:main}, which gives a universal upper bound on the mutual information $I(f(X);Y)$, that holds for any \emph{balanced} Boolean function.
The mutual information can be expressed as
\begin{align}
I(f(X);Y)&=H(f(X))-H(f(X)|Y) =1-\mathbb{E}_Y h(P_Y^f).
\end{align}
We note that $h(\cdot)$ admits the following Taylor series
\begin{align}
h\left(\frac{1-p}{2}\right)=1-\sum_{k=1}^{\infty}\frac{\log(e)}{2k(2k-1)}p^{2k},\label{eq:entTaylor}
\end{align}
and can be lower bounded by replacing $p^{2k}$ with $p^{2t}$ for all $k>t$, i.e.,
\begin{align}
h\left(\frac{1-p}{2}\right)&\geq 1-\sum_{k=1}^{t-1}\frac{\log(e)}{2k(2k-1)}p^{2k}-p^{2t}\sum_{k=t}^{\infty}\frac{\log(e)}{2k(2k-1)}\nonumber\\
&=1-\sum_{k=1}^{t-1}\frac{\log(e)}{2k(2k-1)}p^{2k}-\left(1-\sum_{k=1}^{t-1}\frac{\log(e)}{2k(2k-1)}\right)p^{2t}, \label{eq:entTaylor2}
\end{align}
where we have used the fact that $h(0)=0$.
Using~\eqref{eq:entTaylor2}, for any $t\in \mathbb{N}$, we can further lower bound $\mathbb{E}_Y h(P_Y^f)$ as
\begin{align}
&\mathbb{E}_Y h(P_Y^f)=\mathbb{E}_Y h\left(\frac{1-(1-2P_Y^f)}{2}\right)\nonumber\\
&\geq 1-\sum_{k=1}^{t-1}\frac{\log(e)}{2k(2k-1)}\Expt_Y\left((1-2P_Y^f)^{2k}\right)\nonumber\\
&-\left(1-\sum_{k=1}^{t-1}\frac{\log(e)}{2k(2k-1)}\right)\Expt_Y\left((1-2P_Y^f)^{2t}\right).\label{eq:entbound}
\end{align}
Thus, any upper bound on the first $t$ even moments $\Expt[(1-2P_Y^f)^{2k}]$, $k=1,\ldots,t$, would directly translate to an upper bound on $I(f(X);Y)$.

We obtain an upper bound on all even moments of the random variable $1-2P_Y^f$, using a simple trick combined with the Hypercontractivity Theorem.
Formally, we prove the following lemma.

\begin{lemma}\label{lem:momentT}
Let $k\geq 1$ be an integer satisfying $(1-2\alpha)\sqrt{2k-1}\leq 1$. For any balanced Boolean function $f:\{-1,1\}^n\mapsto\{-1,1\}$ we have that
\begin{align}
\mathbb{E}_Y \left((1-2P_Y^{f})^{2k}\right)\leq (2k-1)^{k}(1-2\alpha)^{2k}.\nonumber
\end{align}
\end{lemma}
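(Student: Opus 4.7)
The plan is to estimate $\mathbb{E}_Y[(1-2P_Y^f)^{2k}]$ as the $2k$-th power of the $L^{2k}$ norm of $T_{1-2\alpha}f$, using the identity $1-2P_y^f = (T_{1-2\alpha}f)(y)$ from \eqref{eq:Fourier}. A direct application of the Hypercontractivity Theorem with $\rho = 1-2\alpha$, $p=2$, $q=2k$ only yields the trivial bound $\|T_{1-2\alpha}f\|_{2k}^{2k}\le \|f\|_2^{2k}=1$, so the balancedness of $f$ must be exploited in order to extract the factor $(1-2\alpha)^{2k}$ on the right-hand side.

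The key trick is to split the noise operator using the semigroup property $T_{\rho_1\rho_2}=T_{\rho_1}T_{\rho_2}$. Set
\begin{align}
\rho_2 \triangleq \frac{1}{\sqrt{2k-1}}, \qquad \rho_1 \triangleq (1-2\alpha)\sqrt{2k-1},\nonumber
\end{align}
so that $\rho_1\rho_2 = 1-2\alpha$ and, by the standing assumption $(1-2\alpha)\sqrt{2k-1}\leq 1$, we have $\rho_1\leq 1$. Writing $g\triangleq T_{\rho_1}f$, we then have $T_{1-2\alpha}f = T_{\rho_2}g$. Theorem~\ref{thm:hyper} with $p=2$, $q=2k$ applies precisely because $\rho_2 = \sqrt{(p-1)/(q-1)}$, and gives
\begin{align}
\left[\Expt\big((T_{1-2\alpha}f)(Y)\big)^{2k}\right]^{\frac{1}{2k}} = \|T_{\rho_2} g\|_{2k} \leq \|g\|_2.\nonumber
\end{align}

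It remains to bound $\|g\|_2^2 = \|T_{\rho_1}f\|_2^2$. By Parseval,
\begin{align}
\|T_{\rho_1}f\|_2^2 = \sum_{S\subseteq[n]} \hat{f}(S)^2 \rho_1^{2|S|}.\nonumber
\end{align}
Since $f$ is balanced, property (i) of Proposition~\ref{prop:fourier_properties} gives $\hat{f}(\emptyset)=0$, so the sum is over $|S|\geq 1$. Using $\rho_1\leq 1$, we have $\rho_1^{2|S|}\leq \rho_1^2$ for all $|S|\geq 1$, hence
\begin{align}
\|T_{\rho_1}f\|_2^2 \leq \rho_1^2 \sum_{|S|\geq 1}\hat{f}(S)^2 = \rho_1^2,\nonumber
\end{align}
by Parseval applied to the Boolean function $f$. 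Raising to the $k$-th power and combining with the hypercontractive bound above yields
\begin{align}
\Expt_Y\big((1-2P_Y^f)^{2k}\big) \leq \rho_1^{2k} = (2k-1)^k(1-2\alpha)^{2k},\nonumber
\end{align}
as claimed. The only subtle point is the choice of $\rho_1,\rho_2$: it saturates the hypercontractivity condition so that the balancedness-induced factor $\rho_1^2$ emerges with the correct exponent, and the hypothesis $(1-2\alpha)\sqrt{2k-1}\leq 1$ is exactly what is needed to conclude $\rho_1\le 1$.
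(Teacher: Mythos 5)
Your proof is correct and follows essentially the same route as the paper: the same splitting of the noise operator $T_{1-2\alpha}=T_{1/\sqrt{2k-1}}\,T_{(1-2\alpha)\sqrt{2k-1}}$, the same application of hypercontractivity with $p=2$, $q=2k$, and the same Parseval/balancedness argument to bound the resulting second moment by $\rho_1^2$ (which the paper isolates as Proposition~\ref{prop:secondmoment}).
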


For the proof, we will need the following proposition.

\begin{proposition}\label{prop:secondmoment}
For any balanced Boolean function $f:\{-1,1\}^n\mapsto\{-1,1\}$ and any $0\leq\rho\leq 1$ we have that
\begin{align}
\mathbb{E}_Y \left((T_{\rho}{f})^{2}(Y)\right)\leq \rho^{2}.\nonumber
\end{align}
\end{proposition}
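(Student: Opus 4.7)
The plan is to prove this via a direct Parseval-type computation using the Fourier expansion of $T_\rho f$ together with the fact that balancedness kills the zero-frequency coefficient.

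First I would expand $(T_\rho f)(y) = \sum_{S \subseteq [n]} \hat{f}(S) \rho^{|S|} \prod_{i \in S} y_i$ and observe that this is itself the Fourier-Walsh expansion of a real function on $\{-1,1\}^n$ with Fourier coefficients $\widehat{T_\rho f}(S) = \rho^{|S|} \hat{f}(S)$. Then by Parseval's identity applied to $T_\rho f$,
\begin{align}
\mathbb{E}_Y\bigl((T_\rho f)^2(Y)\bigr) = \sum_{S \subseteq [n]} \rho^{2|S|} \hat{f}^2(S).\nonumber
\end{align}

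Next I would use Proposition~\ref{prop:fourier_properties}(i): since $f$ is balanced, $\hat{f}(\emptyset) = 0$, so the $S = \emptyset$ term in the sum vanishes and only terms with $|S| \geq 1$ contribute. Since $0 \leq \rho \leq 1$, we have $\rho^{2|S|} \leq \rho^2$ for every $|S| \geq 1$, so
\begin{align}
\sum_{S : |S| \geq 1} \rho^{2|S|} \hat{f}^2(S) \leq \rho^2 \sum_{S : |S| \geq 1} \hat{f}^2(S).\nonumber
\end{align}

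Finally, I would apply Parseval's identity once more to the original Boolean function $f$: since $f$ takes values in $\{-1,1\}$, we have $f^2 \equiv 1$ and therefore $\sum_S \hat{f}^2(S) = \mathbb{E}(f^2(X)) = 1$, which together with $\hat{f}(\emptyset) = 0$ gives $\sum_{|S|\geq 1} \hat{f}^2(S) = 1$. Combining the two bounds yields $\mathbb{E}_Y((T_\rho f)^2(Y)) \leq \rho^2$ as claimed. There is really no obstacle here; the only thing to be careful about is correctly invoking balancedness to discard the $S = \emptyset$ term, since otherwise one would only obtain the trivial bound $\leq 1$.
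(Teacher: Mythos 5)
Your proof is correct and follows essentially the same route as the paper's: both compute $\mathbb{E}_Y\bigl((T_\rho f)^2(Y)\bigr)=\sum_S \rho^{2|S|}\hat{f}^2(S)$ (the paper expands the double sum and uses orthonormality, which is just Parseval made explicit) and then bound this by $\rho^2$ using $\hat{f}(\emptyset)=0$, $\rho\leq 1$, and $\sum_S\hat{f}^2(S)=1$. No meaningful difference in substance.
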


\begin{proof}
By the definition of the operator $T_{\rho}$:
\begin{align} \label{eq_second_moment}
&\Expt_Y\left((T_{\rho}f)^2(Y)\right)=\mathbb{E}_Y\left(\left(\sum_{S\subseteq[n]} \hat{f}(S)\rho^{|S|}\prod_{i\in S} Y_i \right)^2\right)\nonumber\\
&=\mathbb{E}_Y\bigg(\sum_{S_1\subseteq[n]} \hat{f}(S_1)\rho^{|S_1|}\prod_{i\in S_1} Y_i\sum_{S_2\subseteq[n]}\hat{f}(S_2)\rho^{|S_2|}\prod_{j\in S_2} Y_j\bigg)\nonumber\\
&=\sum_{S_1\subseteq[n]} \hat{f}(S_1)\rho^{|S_1|}\sum_{S_2\subseteq[n]}\hat{f}(S_2)\rho^{|S_2|}\mathbb{E}_Y\left(\prod_{i\in S_1} Y_i \prod_{j\in S_2} Y_j \right)\nonumber\\
&=\sum_{S_1\subseteq[n]} \hat{f}(S_1)\rho^{|S_1|}\sum_{S_2\subseteq[n]}\hat{f}(S_2)\rho^{|S_2|}\mathbf{1}_{S_1=S_2}\nonumber\\
&=\sum_{S\subseteq[n]} \hat{f}^2(S)\rho^{2|S|},
\end{align}
where $\mathbf{1}_{S_1=S_2}$ is the indicator function on the event $S_1=S_2$.
Recalling that $\sum_{S\subseteq[n]}\hat{f}^2(S)=1$ and our assumptions that $\rho\leq 1$ and that \mbox{$\hat{f}(\emptyset)=\mathbb{E}(f(X))=0$}, the ``weight'' assignment $\hat{f}^2(S)$ that maximizes $\sum_{S\subseteq[n]} \hat{f}^2(S)\rho^{2|S|}$ puts all the weight on characters $S$ whose cardinality is $|S|=1$. Hence,
\begin{align}
\sum_{S\subseteq[n]} &\hat{f}^2(S)\rho^{2|S|}\leq \rho^2,\nonumber
\end{align}
as desired.
\end{proof}

\begin{proof}[Proof of Lemma~\ref{lem:momentT}]
Let $f:\{-1,1\}^n\mapsto\{-1,1\}$ be a Boolean function and let $g\triangleq T_{1-2\alpha}f$. Let $k\geq 1$ be an integer, and let $\rho=\sqrt{1/(2k-1)}=\sqrt{(2-1)/(2k-1)}$. By~\eqref{eq:Fourier}, we have
\begin{align}
\Expt_Y((1-2P_Y^f)^{2k})&=\Expt_Y\left(g^{2k}(Y)\right)\nonumber\\
&=\Expt_Y\left(\left(T_{\rho}(T_{1/\rho}g)\right)^{2k}(Y)\right)\nonumber\\
&\leq \left[\Expt_Y\left((T_{1/\rho}g)^2(Y)\right)\right]^{k}\label{eq:hyperapp}\\
&=\left[\Expt_Y\left((T_{(1-2\alpha)\sqrt{2k-1}}f)^2(Y)\right)\right]^{k},\label{eq:gdef}
\end{align}
where~\eqref{eq:hyperapp} follows from the Hypercontractivity Theorem taken with $p=2, q=2k$ (as $\rho$ satisfies the premise of the theorem),
and~\eqref{eq:gdef} from the definition of the function $g$. Invoking Proposition~\ref{prop:secondmoment} with $\rho=(1-2\alpha)\sqrt{2k-1}\leq 1$, we obtain that for any balanced Boolean function and any $k\geq 1$
\begin{align}
&\mathbb{E}_Y \left((1-2P_Y^{f})^{2k}\right)\leq (2k-1)^k(1-2\alpha)^{2k}
\end{align}
as desired.
\end{proof}

The following is an immediate consequence of Lemma~\ref{lem:momentT} and~\eqref{eq:entbound}.

\begin{proposition}\label{prop:mainT}
For any balanced Boolean function $f:\{-1,1\}^n\mapsto\{-1,1\}$, any integer $t\geq 1$ and any
$\tfrac{1}{2}\left( 1-\tfrac{1}{\sqrt{2t-1}}\right)\leq\alpha\leq\tfrac{1}{2}$, we have that
\begin{align}
I(f(X);Y)&\leq \sum_{k=1}^{t-1}\frac{\log(e)}{2k(2k-1)}(2k-1)^{k}(1-2\alpha)^{2k}\nonumber\\
&+\left(1-\sum_{k=1}^{t-1}\frac{\log(e)}{2k(2k-1)}\right)(2t-1)^{t}(1-2\alpha)^{2t}.\label{eq:genBound}
\end{align}
\end{proposition}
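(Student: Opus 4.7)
The plan is to combine the entropy lower bound~\eqref{eq:entbound}, which is obtained by truncating the Taylor series of $h\!\left(\tfrac{1-p}{2}\right)$, with the moment estimate of Lemma~\ref{lem:momentT}. Since $I(f(X);Y)=1-\Expt_Y h(P_Y^f)$, a lower bound on $\Expt_Y h(P_Y^f)$ expressed through the even moments of $1-2P_Y^f$ translates directly into an upper bound on the mutual information.

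Concretely, I would begin by rearranging~\eqref{eq:entbound} into the form
\begin{align*}
I(f(X);Y)\leq &\sum_{k=1}^{t-1}\frac{\log(e)}{2k(2k-1)}\Expt_Y\!\left((1-2P_Y^f)^{2k}\right)\\
&+\left(1-\sum_{k=1}^{t-1}\frac{\log(e)}{2k(2k-1)}\right)\Expt_Y\!\left((1-2P_Y^f)^{2t}\right).
\end{align*}
Note that~\eqref{eq:entbound} itself is valid unconditionally, since $1-2P_Y^f\in[-1,1]$ and so replacing $p^{2k}$ by $p^{2t}$ for $k\geq t$ in the Taylor series is indeed a lower bound. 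For the rearrangement to be usable, I need the coefficient multiplying the top-order moment to be nonnegative. This follows from the partial-fraction identity $\frac{1}{2k(2k-1)}=\frac{1}{2k-1}-\frac{1}{2k}$, which telescopes to $\sum_{k=1}^{\infty}\frac{1}{2k(2k-1)}=\ln 2$, and hence $\sum_{k=1}^{\infty}\frac{\log(e)}{2k(2k-1)}=1$, so every finite partial sum is strictly less than $1$.

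Having confirmed that both coefficients on the right hand side are nonnegative, I would invoke Lemma~\ref{lem:momentT} once for each $k\in\{1,\ldots,t\}$ to replace $\Expt_Y\!\left((1-2P_Y^f)^{2k}\right)$ by its upper bound $(2k-1)^{k}(1-2\alpha)^{2k}$. The lemma's hypothesis $(1-2\alpha)\sqrt{2k-1}\leq 1$ is monotone in $k$, so the binding constraint is at $k=t$, and $(1-2\alpha)\sqrt{2t-1}\leq 1$ is exactly the assumed lower bound $\alpha\geq\tfrac{1}{2}\!\left(1-1/\sqrt{2t-1}\right)$. Substituting the moment bounds term by term then yields~\eqref{eq:genBound}.

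I do not foresee any significant obstacle; the only nontrivial bookkeeping is the sign check verifying nonnegativity of $1-\sum_{k=1}^{t-1}\frac{\log(e)}{2k(2k-1)}$, which is precisely why the authors describe the proposition as an immediate consequence of Lemma~\ref{lem:momentT} and~\eqref{eq:entbound}. As a sanity check, for $t=1$ the first sum is empty and the bound reduces to $(1-2\alpha)^2$, recovering~\eqref{eq:gerber}, while for $t=2$ it matches the statement of Theorem~\ref{thm:main}.
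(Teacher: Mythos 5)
Your proposal is correct and follows exactly the paper's route: the authors also obtain Proposition~\ref{prop:mainT} by substituting the moment bounds of Lemma~\ref{lem:momentT} for each $k=1,\ldots,t$ into the rearranged entropy bound~\eqref{eq:entbound}, with the constraint at $k=t$ fixing the admissible range of $\alpha$. Your explicit verification that $\sum_{k=1}^{\infty}\frac{\log(e)}{2k(2k-1)}=1$, so the top-order coefficient is nonnegative, is the only detail the paper leaves implicit, and you handle it correctly.
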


Theorem~\ref{thm:main} now follows by evaluating~\eqref{eq:genBound} with $t=2$. Note that for balanced functions the upper bound $(1-2\alpha)^2$, which was the best known bound hitherto, is obtained as a special case of Proposition~\ref{prop:mainT} by setting $t=1$. It is easy to verify that for $\tfrac{1}{3}<\alpha<\tfrac{1}{2}$ the upper bound in Theorem~\ref{thm:main} is tighter. See Figure~\ref{fig:Bounds} for a comparison between the bounds.

%
%

\begin{figure}[htb]
  \centering
      \includegraphics[width=0.9\columnwidth]{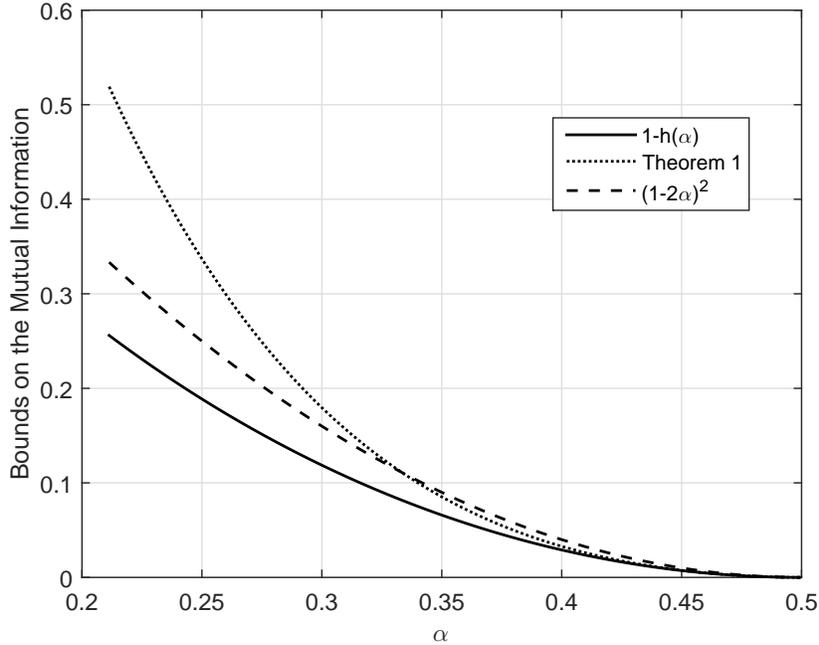}
  \caption{A comparison between the bound from Theorem~\ref{thm:main}, the conjectured bound and best known previous upper bound on $I(f(X);Y)$. Note that the bound from Theorem~\ref{thm:main} is only valid for balanced functions.}
  \label{fig:Bounds}
\end{figure}

\begin{remark}
In~\cite[Appendix B, Remark 6]{ck14}, it is claimed that Conjecture~\ref{conj:ck} can be shown to hold in the limit of $\alpha\to\tfrac{1}{2}$. Theorem~\ref{thm:main} demonstrates this fact for balanced functions, as for $\alpha\to\tfrac{1}{2}$ the ratio between the RHS of~\eqref{eq:mainbound} and the conjectured bound tends to $1$. As we discuss below, a slightly stronger statement can be shown to hold.
\label{rem:courtadepaper}
\end{remark}

The next simple corollary of Theorem~\ref{thm:main} establishes the optimality of the dictatorship function among all balanced functions in the very noisy regime. The proof is essentially a consequence of the discreteness of the space of Boolean functions.

\begin{corollary}[Dictatorship is optimal for very noisy channels]
\label{cor:low_snr}
Let $f:\{0,1\}^n\mapsto\{0,1\}$ be any balanced Boolean function. There exists $\overline{\alpha}_n>0$ such that $I(f(X);Y)\leq1-h(\alpha)$ for all $\alpha\in\left[\tfrac{1}{2}-\overline{\alpha}_n,\tfrac{1}{2}\right]$. In particular, dictatorship is the most informative balanced function in the noise interval $\alpha\in\left[\tfrac{1}{2}-\overline{\alpha}_n,\tfrac{1}{2}\right]$.
\end{corollary}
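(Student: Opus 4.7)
My plan is to refine the proof of Theorem~\ref{thm:main} by using Proposition~\ref{prop:fourier_properties} to strictly improve the leading $\delta^2$ coefficient in the upper bound whenever $f$ is not a dictator, writing $\delta \triangleq 1-2\alpha$ throughout. Dictator functions attain $I(f(X);Y) = 1-h(\alpha)$ with equality, so only non-dictators require attention.

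Let $f$ be a balanced non-dictator, and set $W \triangleq \sum_{|S|\geq 2}\hat{f}^2(S)$. By Proposition~\ref{prop:fourier_properties}(iii) one has $W>0$, and by Proposition~\ref{prop:fourier_properties}(ii) every nonzero Fourier coefficient satisfies $|\hat{f}(S)| \geq 2^{-n}$, hence the uniform estimate $W \geq 2^{-2n}$ holds. Using the Parseval identity appearing inside the proof of Proposition~\ref{prop:secondmoment} together with $\hat{f}(\emptyset) = 0$, I refine the second-moment bound as
\begin{equation*}
\Expt_Y\bigl[(1-2P_Y^f)^2\bigr] \;=\; \sum_{|S|\geq 1}\hat{f}^2(S)\,\delta^{2|S|} \;\leq\; (1-W)\delta^2 + W\delta^4,
\end{equation*}
which is strictly smaller than the coarse $\delta^2$ bound used in Theorem~\ref{thm:main}.

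Plugging this into $I(f(X);Y)\leq \frac{\log(e)}{2}\Expt[(1-2P_Y^f)^2] + \bigl(1-\frac{\log(e)}{2}\bigr)\Expt[(1-2P_Y^f)^4]$ (the $t=2$ specialization of \eqref{eq:entbound}) and applying Lemma~\ref{lem:momentT} at $k=2$ to bound the fourth moment by $9\delta^4$ (valid for $\delta\leq 1/\sqrt{3}$), I obtain
\begin{equation*}
I(f(X);Y) \;\leq\; \tfrac{\log(e)}{2}(1-W)\delta^2 + \Bigl[\tfrac{\log(e)}{2}W + 9\bigl(1-\tfrac{\log(e)}{2}\bigr)\Bigr]\delta^4.
\end{equation*}
Since every coefficient of the Taylor series \eqref{eq:entTaylor} of $h$ is positive, $1-h(\alpha)\geq \frac{\log(e)}{2}\delta^2$, so the desired $I(f(X);Y)\leq 1-h(\alpha)$ is implied by the simple scalar inequality $\bigl[\tfrac{\log(e)}{2}W + 9(1-\tfrac{\log(e)}{2})\bigr]\delta^2 \leq \tfrac{\log(e)}{2}W$, equivalently $\delta^2 \leq \tfrac{(\log e)W/2}{(\log e)W/2 + 9(1-\log(e)/2)}$. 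Since $W\geq 2^{-2n}$, this threshold is a strictly positive number depending only on $n$; choosing $\overline{\alpha}_n$ equal to half its square root (and no larger than $\tfrac{1}{2}(1-\tfrac{1}{\sqrt{3}})$ so that Theorem~\ref{thm:main} applies) yields a single $\overline{\alpha}_n>0$ that works for every balanced $f$, dictator or not, and then the ``in particular'' clause follows because a dictator attains the bound $1-h(\alpha)$.

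I do not anticipate any serious obstacle: the proof is essentially a quantitative version of the observation that Theorem~\ref{thm:main} and $1-h(\alpha)$ both have leading term $\tfrac{\log(e)}{2}\delta^2$, while the non-dictator hypothesis via Proposition~\ref{prop:fourier_properties} forces a strictly smaller $\tfrac{\log(e)}{2}(1-W)\delta^2$, subtracting an $\Omega(W\delta^2)$ term that dominates the $O(\delta^4)$ gap. The only subtle point is the need for a \emph{uniform} lower bound $W\geq 2^{-2n}$ across all balanced non-dictators, which is precisely the discreteness-of-Boolean-functions mechanism advertised in the statement of the corollary; without it, one could approach a dictator by non-dictators with $W\to 0$ and no uniform $\overline{\alpha}_n>0$ would exist.
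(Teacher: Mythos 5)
Your proposal is correct and follows essentially the same route as the paper's own proof: the $t=2$ entropy bound, Lemma~\ref{lem:momentT} at $k=2$ for the fourth moment, the refined second-moment estimate $(1-W)\delta^2+W\delta^4$ via the Fourier decomposition with $\hat f(\emptyset)=0$, the uniform bound $W\geq 4^{-n}$ from discreteness, and the comparison against $\tfrac{\log(e)}{2}\delta^2\leq 1-h(\alpha)$. The only cosmetic difference is that you solve the resulting scalar inequality explicitly for the threshold on $\delta$, whereas the paper simply asserts the verification with the concrete choice $\overline{\alpha}_n=\tfrac14\cdot 2^{-n}$.
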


\begin{proof}
By equation \eqref{eq:entbound} applied with $t=2$, we have that for any balanced Boolean function (and any $\alpha$),
\begin{align}
& I\left(f(X);Y\right)\leq \frac{\log(e)}{2}\Expt_Y\left((1-2P_Y^f)^{2}\right)\nonumber
+ \left(1- \frac{\log(e)}{2}\right)\Expt_Y\left((1-2P_Y^f)^{4}\right). 	\label{eq_low_snr_1}
\end{align}
Lemma \ref{lem:momentT} (applied with $k=2$) implies that for $\tfrac{1}{2}\left(1-\tfrac{1}{\sqrt{3}}\right)\leq \alpha\leq \tfrac{1}{2}$
\begin{align}
\mathbb{E}_Y \left((1-2P_Y^{f})^{4}\right)\leq (2\cdot 2-1)^{2}(1-2\alpha)^{4} = 9(1-2\alpha)^{4}.
\end{align}
Furthermore, by equations \eqref{eq:Fourier} and \eqref{eq_second_moment},  we have that for $\alpha$ in this range
\begin{align}
& \Expt_Y\left((1-2P_Y^f)^{2}\right) = \sum_{S\subseteq [n]} \hat{f}^2(S)(1-2\alpha)^{2|S|} \nonumber \\
& \leq \left(\sum_{|S|=1} \hat{f}^2(S)\right)(1-2\alpha)^{2} + \left(\sum_{|S|\geq 2} \hat{f}^2(S)\right)(1-2\alpha)^{4},
\label{eq_low_snr_3}
\end{align}
where we have used the fact that $\hat{f}(\emptyset)=0$ for balanced functions (Proposition~\ref{prop:fourier_properties}).
Combining the above three inequalities and using the fact that $\sum_{S\subseteq[n]}\hat{f}^2(S)=1$, yields
\begin{align}
& I\left(f(X);Y\right)\leq \frac{\log(e)}{2}\left(1-\sum_{|S|\geq 2} \hat{f}^2(S)\right)(1-2\alpha)^{2} + \nonumber \\
 & + \left(9\left(1- \frac{\log(e)}{2}\right) + \frac{\log(e)}{2}\sum_{|S|\geq 2} \hat{f}^2(S)\right)(1-2\alpha)^{4}.
\label{eq_low_snr_4}
\end{align}

Now, suppose that $f$ is a balanced Boolean function which is \emph{not} a dictatorship function. This implies (in fact, equivalent to)
$\sum_{|S|\geq 2} \hat{f}^2(S) > 0$. By Proposition \ref{prop:fourier_properties}, it therefore must be the case that
\[ \sum_{|S|\geq 2} \hat{f}^2(S) \geq 2^{-2n} = 4^{-n}. \]
Therefore, for such $f$, the RHS of \eqref{eq_low_snr_4} can be upper bounded by
\begin{align}
\frac{\log(e)}{2}(1-4^{-n})(1-2\alpha)^{2}  + \left(9\left(1- \frac{\log(e)}{2}\right) + \frac{\log(e)}{2}4^{-n}\right)(1-2\alpha)^{4}.
\label{eq_low_snr_5}
\end{align}

It can be directly verified that for any $\alpha\in \left[\tfrac{1}{2}-\overline{\alpha}_n, \tfrac{1}{2}\right]$, where \mbox{$\overline{\alpha}_n $ $\triangleq \frac{1}{4}\cdot 2^{-n}$}, the expression in \eqref{eq_low_snr_5} is smaller than
$\frac{\log(e)}{2}(1-2\alpha)^2 < 1-h(\alpha)$, and thus by \eqref{eq_low_snr_4}, for such $\alpha$ we have
\begin{align}
& I\left(f(X);Y\right)\leq 1-h(\alpha),
\end{align}
which completes the proof.
\end{proof}

\begin{remark}
In an unpublished work, Sushant Sachdeva, Alex Samorodnitsky and Ido Shahaf have shown, using different techniques, that dictatorship is optimal among all (not just balanced) Boolean functions from $\{0,1\}^n$ to $\{0,1\}$ for all $\alpha\in\left[\tfrac{1}{2}-2^{-O(n)},\tfrac{1}{2} \right]$.
\label{rem:alexpaper}
\end{remark}

\section{Discussion}

%
%
%
%

A natural question is: What are the limits of the approach pursued in this paper?
To this end we note the following two limitations:

Firstly, the dictatorship function, which is conjectured to be optimal, satisfies $\mathbb{E}_Y \left((1-2P_Y^{f})^{2k}\right)=(1-2\alpha)^{2k}$ for every $k\in\mathbb{N}$. The ratio between the bound in Lemma \ref{lem:momentT} on the $k$-th moment of any balanced Boolean function and $(1-2\alpha)^{2k}$ grows rapidly with $k$.
For this reason, we only get mileage from applying the lemma with $k=1,2$ and not for higher moments.

The second limitation is that Lemma \ref{lem:momentT} upper bounds \emph{each moment separately}, while we are seeking an upper bound
on the entire distribution (weighted sum) of the moments: Quantifying the tradeoff between higher and lower moments seems to be one of the ``brick walls" in 
proving the conjecture. For example, the dictatorship function has the largest second moment among all balanced functions,
but it is not hard to see that the majority function, for example, has a much larger (relatively speaking) $k$th moment for very large values of $k$. To see this, note that for $k\gg 2^n$,
\begin{align}
\mathbb{E}_Y \left((1-2P_Y^{f})^{2k}\right) \gtrsim 2^{-n}\cdot \max_{y}|1-2P_y^{f}|^{2k}.\nonumber
\end{align}
For the majority function, the maximum is attained at $y=(1,1,,1\ldots, 1)$ for which $P_y^{f} \approx 2^{-nD\left(\tfrac{1}{2}||\alpha\right)}$ and consequently $\max_{y}|1-2P_y^{Maj}|\approx 1-2^{-nD\left(\tfrac{1}{2}||\alpha\right)}$. For dictatorship, on the other hand, $|1-2P_Y^f|=1-2\alpha$ for every $y$, and therefore
\begin{align}
\max_{y}|1-2P_y^{Maj}|^{2k}\gg\max_{y}|1-2P_y^{Dict}|^{2k}.\nonumber
\end{align}
Therefore, one cannot hope to prove that there is a single function that simultaneously maximizes all moments; rather, the conjecture postulates
that there is some tradeoff between these values and the largest mutual information is attained by functions that maximize lower moments at the expense of  higher ones.

\section{Acknowledgement}

In a previous version of this paper (http://arxiv.org/abs/1505.05794v1), we have presented Corollary~\ref{cor:low_snr} as a new result, and in addition we have proved that dictatorship is the optimal function for all $\alpha\in\left[0,\tfrac{2^{-2n}}{16n^2} \right]$. We are grateful to Thomas Courtade for bringing to our attention that (slightly weaker versions of) these results were already known, as partially discussed in Remark~\ref{rem:courtadepaper}.


%
\bibliographystyle{IEEEtran}
\bibliography{OrBib2}

%
%

\end{document}